\newdimen\inferLineSkip		\inferLineSkip=2pt
\newdimen\inferLabelSkip	\inferLabelSkip=5pt
\def\inferTabSkip{\quad}
\newdimen\inferRuleThickness
\newdimen\@LeftOffset	
\newdimen\@RightOffset	
\newdimen\@SavedLeftOffset	
\newdimen\UpperWidth
\newdimen\LowerWidth
\newdimen\LowerHeight
\newdimen\UpperLeftOffset
\newdimen\UpperRightOffset
\newdimen\UpperCenter
\newdimen\LowerCenter
\newdimen\UpperAdjust
\newdimen\RuleAdjust
\newdimen\LowerAdjust
\newdimen\RuleWidth
\newdimen\HLabelAdjust
\newdimen\VLabelAdjust
\newdimen\WidthAdjust
\newbox\@UpperPart
\newbox\@LowerPart
\newbox\@LabelPart
\newbox\ResultBox
\newif\if@inferRule	
\newif\if@DoubleRule	
\newif\if@ReturnLeftOffset	
\def\infer@post@box#1{#1}
\def\inferPostLabel#1{#1}
\def\DeduceSym{\vtop{\baselineskip4\p@ \lineskiplimit\z@
    \vbox{\hbox{.}\hbox{.}\hbox{.}}\hbox{.}}}
\def\@IFnextchar#1#2#3{%
  \let\reserved@e=#1\def\reserved@a{#2}\def\reserved@b{#3}\futurelet
    \reserved@c\@IFnch}
\def\@IFnch{\ifx \reserved@c \@sptoken \let\reserved@d\@xifnch
      \else \ifx \reserved@c \reserved@e\let\reserved@d\reserved@a\else
          \let\reserved@d\reserved@b\fi
      \fi \reserved@d}
\def\@ifEmpty#1#2#3{\def\@tempa{\@empty}\def\@tempb{#1}\relax
	\ifx \@tempa \@tempb #2\else #3\fi }
\def\infer{\@IFnextchar *{\@inferSteps}{\relax
	\@IFnextchar ={\@inferDoubleRule}{\@inferOneStep}}}
\def\linfer{\def\infer@post@box##1{\lower\LowerHeight##1}\infer}
\def\@inferOneStep{\@inferRuletrue \@DoubleRulefalse
	\@IFnextchar [{\@infer}{\@infer[\@empty]}}
\def\@inferDoubleRule={\@inferRuletrue \@DoubleRuletrue
	\@IFnextchar [{\@infer}{\@infer[\@empty]}}
\def\@inferSteps*{\@IFnextchar [{\@@inferSteps}{\@@inferSteps[\@empty]}}
\def\@@inferSteps[#1]{\@deduce{#1}[\DeduceSym]}
\def\deduce{\@IFnextchar [{\@deduce{\@empty}}
	{\@inferRulefalse \@infer[\@empty]}}
\def\@deduce#1[#2]#3#4{\@inferRulefalse
	\@infer[\@empty]{#3}{\@infer[{#1}]{#2}{#4}}}
\def\@infer[#1]#2#3{\relax
	\if@ReturnLeftOffset \else \@SavedLeftOffset=\@LeftOffset \fi
	\setbox\@LabelPart=\hbox{\inferPostLabel{#1}}\relax
	\setbox\@LowerPart=\hbox{$#2$}\relax
	\global\@LeftOffset=0pt
	\setbox\@UpperPart=\vbox{\tabskip=0pt\let\infer@post@box=\relax\halign{\relax
		\global\@RightOffset=0pt \@ReturnLeftOffsettrue $##$&&
		\inferTabSkip
		\global\@RightOffset=0pt \@ReturnLeftOffsetfalse $##$\cr
		#3\cr}}\relax
	\UpperLeftOffset=\@LeftOffset
	\UpperRightOffset=\@RightOffset
	\LowerWidth=\wd\@LowerPart
	\LowerHeight=\ht\@LowerPart
	\LowerCenter=0.5\LowerWidth
	\UpperWidth=\wd\@UpperPart \advance\UpperWidth by -\UpperLeftOffset
	\advance\UpperWidth by -\UpperRightOffset
	\UpperCenter=\UpperLeftOffset
	\advance\UpperCenter by 0.5\UpperWidth
	\ifdim \UpperWidth > \LowerWidth
	\UpperAdjust=0pt
	\RuleAdjust=\UpperLeftOffset
	\LowerAdjust=\UpperCenter \advance\LowerAdjust by -\LowerCenter
	\RuleWidth=\UpperWidth
	\global\@LeftOffset=\LowerAdjust
	\else	
	\ifdim \UpperCenter > \LowerCenter
	\UpperAdjust=0pt
	\RuleAdjust=\UpperCenter \advance\RuleAdjust by -\LowerCenter
	\LowerAdjust=\RuleAdjust
	\RuleWidth=\LowerWidth
	\global\@LeftOffset=\LowerAdjust
	\else	
%
	\UpperAdjust=\LowerCenter \advance\UpperAdjust by -\UpperCenter
	\RuleAdjust=0pt
	\LowerAdjust=0pt
	\RuleWidth=\LowerWidth
	\global\@LeftOffset=0pt
	\fi\fi
	\if@inferRule
	\setbox\ResultBox=\vbox{
		\moveright \UpperAdjust \box\@UpperPart
		\nointerlineskip \kern\inferLineSkip
		\if@DoubleRule
		\moveright \RuleAdjust \vbox{\hrule height \inferRuleThickness width\RuleWidth
			\kern 1pt\hrule height \inferRuleThickness width\RuleWidth}\relax
		\else
		\moveright \RuleAdjust \vbox{\hrule height \inferRuleThickness width\RuleWidth}\relax
		\fi
		\nointerlineskip \kern\inferLineSkip
		\moveright \LowerAdjust \box\@LowerPart }\relax
	\@ifEmpty{#1}{}{\relax
	\HLabelAdjust=\wd\ResultBox	\advance\HLabelAdjust by -\RuleAdjust
	\advance\HLabelAdjust by -\RuleWidth
	\WidthAdjust=\HLabelAdjust
	\advance\WidthAdjust by -\inferLabelSkip
	\advance\WidthAdjust by -\wd\@LabelPart
	\ifdim \WidthAdjust < 0pt \WidthAdjust=0pt \fi
	\VLabelAdjust=\dp\@LabelPart
	\advance\VLabelAdjust by -\ht\@LabelPart
	\VLabelAdjust=0.5\VLabelAdjust	\advance\VLabelAdjust by \LowerHeight
	\advance\VLabelAdjust by \inferLineSkip
	\setbox\ResultBox=\hbox{\box\ResultBox
		\kern -\HLabelAdjust \kern\inferLabelSkip
		\raise\VLabelAdjust \box\@LabelPart \kern\WidthAdjust}\relax
	}\relax 
	\else 
	\setbox\ResultBox=\vbox{
		\moveright \UpperAdjust \box\@UpperPart
		\nointerlineskip \kern\inferLineSkip
		\moveright \LowerAdjust \hbox{\unhbox\@LowerPart
			\@ifEmpty{#1}{}{\relax
			\kern\inferLabelSkip \unhbox\@LabelPart}}}\relax
	\fi
	\global\@RightOffset=\wd\ResultBox
	\global\advance\@RightOffset by -\@LeftOffset
	\global\advance\@RightOffset by -\LowerWidth
	\if@ReturnLeftOffset \else \global\@LeftOffset=\@SavedLeftOffset \fi
	\infer@post@box{\box\ResultBox}
}
\definecolor{highlight}{HTML}{B22222}
\newcommand\ie{\textit{i.\@ e.\@},\xspace}
\newcommand\etc{\textit{etc}\@ifnextchar.\dot@{.\@}}
\def\dot@.{\@.\xspace}
\newcommand\providetheorem[2][plain]{%
\@ifundefined{#2}{\theoremstyle{#1}\newtheorem{#2}}{\ignorenewtheorem}}
\newcommand\ignorenewtheorem[2][]{}
\newcommand\defin{\hfill$\lrcorner$}
\newcommand\GOR {\mid}
\let\vert@vert=\|
\let\vert@=|
\def\vert@minus-{\mathop{\mathstrut{\vdash}}\nolimits}%
\def\vert@eq={\mathbin{\mathstrut{\Vdash\,}}}%
\def\vert@gt>{\mathbin{\gg}}%
\def\vert@vert!{\vert@}%
\newcommand\set[1]{\left\{#1\right\}}
\newcommand\str[1]{\left\langle#1\right\rangle}
\providecommand\amp{\ensuremath{\text{\sffamily\&}}}
\providecommand\invamp{\rotatebox[origin=c]{180}{\amp}}
\newcommand\TENS {\mathbin{\mathstrut{\pmb\otimes}}}
\newcommand\ONE  {\mathsf{\mathstrut1}}
\newcommand\PLUS {\mathbin{\mathstrut{\pmb\oplus}}}
\newcommand\ZERO {\mathsf{\mathstrut0}}
\newcommand\PAR  {\mathbin{\mathstrut{\invamp}}}
\newcommand\BOT  {{\mathstrut\pmb\bot}}
\newcommand\WITH {\mathbin{\mathstrut\amp}}
\newcommand\TOP  {{\mathstrut\pmb\top}}
\newcommand\BANG {\mathop{\mathstrut\text{\sffamily !}}\nolimits}
\newcommand\QM   {\mathop{\mathstrut\text{\sffamily ?}}\nolimits}
\newcommand\NEG  {\mathop{\mathstrut\lnot}}
\def\ALL#1.{\mathstrut\forall #1.\,}
\def\EX#1.{\mathstrut\exists #1.\,}
\def\MUS{\upmu}
\def\NUS{\upnu}
\def\MU#1.{\MUS\mkern 1mu{#1}.\,}
\def\NU#1.{\NUS\mkern 1mu{#1}.\,}
\def\LAM#1.{\uplambda #1.\,}
\newcommand\LOC{\Lambda}
\newcommand\HOLE {\Box}
\def\<{\mkern -2mu\left\langle}
\def\>{\right\rangle\mkern -2mu}
\providecommand\Upgamma{\Gamma}
\def\G{\Upgamma}
\providecommand\Updelta{\Delta}
\def\D{\Updelta}
\providecommand\Upomega{\Omega}
\def\W{\Upomega}
\providecommand\Upsigma{\Sigma}
\def\Si{\Upsigma}
\def\foc#1{\left[{#1}\right]}
\def\focrn#1{\ensuremath{\foc{\rn{#1}}}\xspace}
\colorlet{hil}{black!15!white}
\newcommand\ssh[3][hil]{%
\bgroup%
\def\HOLE{{\fboxsep 0pt\colorbox{#1}{$#3$}}}%
#2%
\egroup}
\newcommand\proofsystem[1]{\ensuremath{\text{\smaller\rmfamily\slshape #1}}\xspace}
\DeclareRobustCommand\muMAIS{\@ifstar{{\mu@MAIS}\ensuremath{^*}}{\mu@MAIS}}
\def\mu@MAIS{\proofsystem{$\mu$MAIS}}
\newcommand\MLL{\proofsystem{MLL}}
\newcommand\MELL{\proofsystem{MELL}}
\newcommand\MALL{\proofsystem{MALL}}
\newcommand\MAELL{\proofsystem{MAELL}}
\newcommand\SEL{\proofsystem{SEL}}
\newcommand\SELL{\proofsystem{SELL}}
\newcommand\MSEL{\proofsystem{MSEL}}
\newcommand\TWORM{\proofsystem{2RM}}
\renewcommand\inferTabSkip{\hspace{1em minus .5em}}
\def\inferPostLabel#1{\hbox{\smaller[2]$#1$}}
\newcommand\rn[1]{\ensuremath{\text{\sffamily\upshape #1}}\xspace}
\newcommand\xinfer[1][]{\def\xinfer@label{#1}\xinfer@}
\newcommand{\xinfer@}[3][\cx]{\linfer[\xinfer@label]{\ssx[#1]{#2}}{\ssx[#1]{#3}}}
\let\deriv@main=\infer
\newcommand\deriv[1]{\deriv@#1;\deriv@end}
\def\deriv@#1;#2\deriv@end{%
\@ifempty{#2}{#1}{%
\@ifnextchar[\deriv@named\deriv@anon#1\deriv@t{\let\deriv@main=\infer\deriv@#2\deriv@end}%
}}
\def\deriv@named[#1]#2\deriv@t#3{\deriv@main[#1]{#2}{#3}}
\def\deriv@anon#1\deriv@t#2{\deriv@main{#1}{#2}}
\newcommand\lderiv[1]{\let\deriv@main=\linfer\deriv@#1,\deriv@end}
\newcommand\Der[2][]{\ensuremath{\Der@#1\Der@pc#2\Der@end}}
\def\Der@#1\Der@pc#2,#3\Der@end{%
  #2 \smash{\xrightarrow{\ #1\ }} #3
}
\renewcommand\k[1]{\text{\ttfamily #1}}
\newcommand\ka{\k{a}}
\newcommand\kb{\k{b}}
\newcommand\kh{\k{h}}
\newcommand\CONF{\mathcal{C}}
\newcommand\TRANS[1][]{{\xrightarrow{\ #1\ }}}
\newcommand\kl[1]{\hbox to 1cm{\scriptsize\hfill$\k{#1}$\hfill}}
\newcommand\TWOSI{\Xi}
\newcommand\UNB{\infty}
\newcommand\kra{\k{ra}}
\newcommand\krb{\k{rb}}
\newcommand\ENC[1]{\mathcal E(#1)}
\title{Undecidability of Multiplicative Subexponential Logic}
\author{%
  Kaustuv Chaudhuri%
  \institute{INRIA, France}%
  \email{kaustuv.chaudhuri@inria.fr}
}
\begin{document}

\maketitle

\begin{abstract}
  Subexponential logic is a variant of linear logic with a family of
  exponential connectives---called \emph{subexponentials}---that are
  indexed and arranged in a pre-order.
  Each subexponential has or lacks associated structural properties of
  weakening and contraction.
  We show that classical propositional multiplicative linear logic
  extended with one unrestricted and two incomparable linear
  subexponentials can encode the halting problem for two register
  Minsky machines, and is hence undecidable.
\end{abstract}

\section{Introduction}
\label{sec:introduction}

The decision problem for classical propositional multiplicative
exponential linear logic (\MELL), consisting of formulas constructed
from propositional atoms using the connectives $\set{\TENS,
  \ONE, \PAR, \BOT, \BANG, \QM}$, is perhaps the longest standing open
problem in linear logic.
\MELL is bounded below by the purely multiplicative fragment (\MLL),
which is decidable even in the presence of first-order quantification,
and above by \MELL with additive connectives (\MAELL), which is
undecidable even for the propositional fragment~\cite{lincoln90apal}.
This paper tries to make the undecidable upper bound a bit tighter by
considering the question of the decision problem for a family of
propositional multiplicative \emph{subexponential} logics
(\MSEL)~\cite{nigam09phd,nigam09ppdp-alt}, each of which consists of
formulas constructed from propositional atoms using the (potentially
infinite) set of connectives $\set{\TENS,\ONE,\PAR,\BOT} \cup \
\bigcup_{u \in \Si} \set{\BANG^u, \QM^u}$, where $\Si$ is a
pre-ordered set of subexponential \emph{labels}, called a
\emph{subexponential signature}, that is a parameter of the family of
logics.
In particular, we show that a particular \MSEL with a subexponential
signature consisting of exactly three labels can encode a two register
Minsky machine (\TWORM), which is Turing-equivalent.
This is the same strategy used in~\cite{lincoln90apal} to show the
undecidability of \MAELL, but the encoding in \MSEL is
different---simpler---for the branching instructions, and shows that
additive behaviour is not essential to implement branching.
We use the classical dialect of linear logic to show these results.
The intuitionistic dialect has the same decision problem because it is
possible to faithfully encode (\ie linearly simulate the sequent
proofs of) the classical dialect in the intuitionistic dialect without
changing the signature~\cite{local:chaudhuri10csl}.

This short note is organized as follows: in
section~\ref{sec:background} we sketch the one-sided sequent
formulation of \MSEL and recall the definition of a \TWORM.
In section~\ref{sec:encoding} we encode the transition system of a
\TWORM in a \MSEL with a particular signature.
In section~\ref{sec:adequacy} we argue that the encoding is
\emph{adequate}, \ie that the halting problem for a \TWORM is reduced
to the proof search problem for this \MSEL-encoding, by appealing to a
focused sequent calculus for \MSEL.
The final section~\ref{sec:perspectives} discusses some of the
ramifications of this result.

\section{Background}
\label{sec:background}

\subsection{Propositional Subexponential Logic}
\label{sec:subexponential-logic}

Let us quickly recall propositional subexponential logic (\SEL) and
its associated sequent calculus proof system.
This logic is sometimes called subexponential \emph{linear} logic
(\SELL), but since it is possible for the subexponentials to have
linear semantics it is redundant to include both adjectives.
Formulas of \SEL ($A, B, \dotsc$) are built from \emph{atomic
  formulas} ($a, b, \dotsc$) according to the following grammar:

\begin{tikzpicture}[node distance=1ex]
  \node [matrix of math nodes] (gr) {
    A, B, \dotsc & ::= &
    \node(a){a}; & \GOR & \node(tens){A \TENS B}; & \GOR & \ONE & \GOR &
    \node(plus){A \PLUS B}; & \GOR & \ZERO &
    \GOR & \node(bang){\BANG^u A}; \\
    & \node[right] {\GOR} ; &
    \node(a'){\NEG a}; & \GOR &
    \node(par){A \PAR B}; & \GOR & \node(bot){\BOT}; & \GOR &
    \node(with){A \WITH B}; & \GOR & \node(top){\TOP}; &
    \GOR & \node(qm){\QM^u A}; \\
  } ;
  \begin{scope}[on background layer]
    \fill[rounded corners,color=green!5!white]
       ($(a.north west)-(.2,0)$) rectangle ($(a'.south east)-(0,.5)$) ;
    \node at ($(a'.south west)!.5!(a'.south east)-(0.05,.2)$){
      \tiny\scshape atomic
    } ;
    \fill[rounded corners,color=blue!5!white]
       (tens.north west) rectangle ($(bot.south east)-(0,.5)$) ;
    \node at ($(par.south west)!.5!(bot.south east)-(0,.2)$) {
      \tiny\scshape multiplicative
    } ;
    \fill [rounded corners,color=red!5!white]
       (plus.north west) rectangle ($(top.south east)-(0,.5)$) ;
    \node at ($(with.south west)!.5!(top.south east)-(0,.2)$) {
      \tiny\scshape additive
    } ;
    \fill [rounded corners,color=orange!15!white]
       (bang.north west) rectangle ($(qm.south east)+(1,-.5)$) ;
    \node at ($(qm.south west)!.5!(qm.south east)+(.5,-.2)$) {
      \tiny \scshape subexponential
    } ;
  \end{scope}
\end{tikzpicture}

\noindent%
Each column in the grammar above is a De Morgan dual pair.
A \emph{positive formula} (depicted with $P$ or $Q$ when relevant) is
a formula belonging to the first line of the grammar, and a
\emph{negative formula} (depicted with $N$ or $M$) is a formula
belonging to the second line.
The \emph{labels} ($u, v, \dotsc$) on the subexponential connectives
$\BANG^u$ and $\QM^u$ belong to a \emph{subexponential signature}
defined below.
The additive fragment of this syntax is just used in this section for
illustration; we will not be using the additives in our encodings.
The fragment without the additives will be called \emph{multiplicative
  subexponential logic} (\MSEL).

\begin{definition}
  A \emph{subexponential signature} $\Si$ is a structure $\str{\LOC,
    U, \le}$ where:
  \begin{itemize}
  \item $\LOC$ is a countable set of \emph{labels};
  \item $U \subseteq \LOC$, called the \emph{unbounded labels}; and
  \item ${\le} \subseteq \LOC \times \LOC$ is a pre-order on $\LOC$---
    \ie it is reflexive and transitive---and $\le$-upwardly closed
    with respect to $U$, \ie for any $u,v\in \LOC$, if $u \in U$ and
    $u \le v$, then $v \in U$. \defin
  \end{itemize}
\end{definition}

\noindent%
We will assume an ambient signature $\Si$ unless we need to
disambiguate particular instances of \MSEL, in which case we will
use $\Si$ in subscripts.
For instance, $\MSEL_\Si$ is a particular instance of \MSEL for $\Si$.

\begin{figure}
  \centering \small
  \begin{gather*}
    \linfer[\rn{init}]{|- a, \NEG a}{}
    \quad
    \linfer[\TENS]{
      |- \G, \D, A \TENS B
    }{
      |- \G, A & |- \D, B
    }
    \quad
    \linfer[\ONE]{
      |- \ONE
    }{}
    \quad
    \linfer[\PLUS_1]{
      |- \G, A \PLUS B
    }{
      |- \G, A
    }
    \quad
    \linfer[\PLUS_2]{
      |- \G, A \PLUS B
    }{
      |- \G, B
    }
    \quad
    \infer[\text{no rule for $\ZERO$}]{}{}
    \\[1ex]
    \linfer[\PAR]{
      |- \G, A \PAR B
    }{
      |- \G, A, B
    }
    \quad
    \linfer[\BOT]{
      |- \G, \BOT
    }{
      |- \G
    }
    \quad
    \linfer[\WITH]{
      |- \G, A \WITH B
    }{
      |- \G, A & |- \G, B
    }
    \quad
    \linfer[\TOP]{
      |- \G, \TOP
    }{}
    \quad
    \linfer[\QM]{
      |- \G, \QM^u A
    }{
      |- \G, A
    }
    \\[1ex]
    \linfer[\BANG]{
      |-_\Si\ \QM^{\vec v} \vec A, \BANG^u C
    }{
      (u \le_\Si \vec v)
      &
      |-_\Si\ \QM^{\vec v} \vec A, C
    }
    \qquad
    \linfer[\rn{weak}]{
      |-_\Si\ \G, \QM^u A
    }{
      (u \in U_\Si)
      &
      |-_\Si\ \G
    }
    \qquad
    \linfer[\rn{contr}]{
      |-_\Si\ \G, \QM^u A
    }{
      (u \in U_\Si)
      &
      |-_\Si\ \G, \QM^u A, \QM^u A
    }
  \end{gather*}
  \caption{%
    Inference rules for a cut-free one-sided sequent calculus formulation of
    \SEL.
    Only the rules on the last line are sensitive to the signature. }
  \label{fig:sel-rules}
\end{figure}

The true formulas of \MSEL are derived from a \emph{sequent calculus}
proof system consisting of sequents of the form $|- A_1, \dotsc, A_n$
(with $n > 0$) and abbreviated as $|- \G$.
The \emph{contexts} ($\G, \D, \ldots$) are multi-sets of formulas of
\SEL, and $\G,\D$ and $\G, A$ stand as usual for the multi-set union
of $\G$ with $\D$ and $\set{A}$, respectively.
The inference rules for \SEL sequents are displayed in
figure~\ref{fig:sel-rules}.
Most of the rules are shared between \SEL and linear logic and will
not be elaborated upon here.
The differences are with the subexponentials, for which we use the
following definition.

\begin{definition}
  For any $n \in N$ and lists $\vec u = [u_1, \dotsc, u_n]$ and $\vec
  A = [A_1, \dotsc, A_n]$, we write $\QM^{\vec u} \vec A$ to stand for
  the context $\QM^{u_1} A_1, \dotsc, \QM^{u_n} A_n$.
  For $\vec v = [v_1, \dotsc, v_n]$, we write $u \le \vec v$ to mean
  that $u \le v_1$, \ldots, and $u \le v_n$. \defin
\end{definition}

The rule for $\BANG$, sometimes called \emph{promotion}, has a side
condition that checks that the label of the principal formula is less
than the labels of all the other formulas in the context.
This rule cannot be used if there are non-$\QM$-formulas in the
context, nor if the labels of some of the $\QM$-formulas are strictly
smaller or incomparable with that of the principal $\BANG$-formula.
Both these properties will be used in the encoding in the next section.
The structural rules of weakening and contraction apply to those
principal $\QM$-formulas with unbounded labels.

%
%

\subsection{Two Register Minsky Machines}
\label{sec:two-register-minsky}

Like Turing machines, Minsky register machines have a finite state
diagram and transitions that can perform I/O on some unbounded storage
device, in this case a bank of registers that can store arbitrary
natural numbers.
We shall limit ourselves to machines with two registers (\TWORM) $\ka$
and $\kb$, which are sufficient to encode Turing machines.

\begin{definition}
  A \TWORM is a structure $\str{Q, *, \CONF, \TRANS{}}$ where:
  \begin{itemize}
  \item $Q$ is a non-empty finite set of \emph{states};
  \item $* \in Q$ is a distinguished \emph{halting state};
  \item $\CONF$ is a set of \emph{configurations}, each of which is a
    structure of the form $\str{q, v}$, with $q \in Q$ and $v :
    \set{\ka, \kb} \to N$, that assigns values (natural numbers) to
    the registers $\ka$ and $\kb$ in state $q$;
  \item $\TRANS{} \subseteq \CONF \times I \times \CONF$ is a
    deterministic labelled transition relation between configurations
    where the label set $I = \set{\k{halt}, \k{incra}, \k{incrb},
      \k{decra}, \k{decrb}, \k{isza}, \k{iszb}}$ (called the
    \emph{instructions}).
  \end{itemize}
  By usual convention, we write $\TRANS{}$ infix with the instruction
  atop the arrow.
  We require that every element of $\TRANS$ fits one of the following
  schemas, where in each case $q, r \in Q$ and $q \neq r$:
  \begin{gather}
    \begin{array}{rcl}
      \str{q, v}
      &\TRANS[\kl{halt}]&
      \str{*, \set{\ka:0, \kb:0}}
      \rlap{\hspace{1cm} $(\text{with }q \neq *)$}
      \\
      \str{q, \set{\ka:m, \kb:n}}
      &\TRANS[\kl{incra}]&
      \str{r, \set{\ka: m + 1, \kb:n}}
      \\
      \str{q, \set{\ka:m, \kb:n}}
      &\TRANS[\kl{incrb}]&
      \str{r, \set{\ka:m, \kb:n + 1}}
      \\
      \str{q, \set{\ka:m + 1, \kb:n}}
      &\TRANS[\kl{decra}]&
      \str{r, \set{\ka: m, \kb:n}}
      \\
      \str{q, \set{\ka:m, \kb:n + 1}}
      &\TRANS[\kl{decrb}]&
      \str{r, \set{\ka:m, \kb:n}}
      \\
      \str{q, \set{\ka:0, \kb:n}}
      &\TRANS[\kl{isza}]&
      \str{r, \set{\ka:0, \kb:n}}
      \\
      \str{q, \set{\ka:m, \kb:0}}
      &\TRANS[\kl{iszb}]&
      \str{r, \set{\ka:m, \kb:0}}
    \end{array}
    \label{eq:trans}
  \end{gather}
  For a \emph{trace} $\vec i = [i_1, \dotsc, i_n]$, we write
  $\str{q_0,v_0} \TRANS[\vec i] \str{q_n, v_n}$ if $\str{q_0, v_0}
  \TRANS[i_1] \dotsm \TRANS[i_n] \str{q_n, v_n}$.
  The \TWORM \emph{halts from} an initial configuration $\str{q_0,
    v_0}$ if there is a trace $\vec i$ such that $\str{q_0, v_0}
  \TRANS[\vec i] \str{*, \set{\ka:0, \kb:0}}$.
  (The configuration $\str{*, \set{\ka:0, \kb:0}}$ will be called the
  \emph{halting configuration}.)
  The \emph{halting problem} for a \TWORM is the decision problem of
  whether the machine halts from an initial configuration. \defin
\end{definition}

The requirement that $\TRANS$ be deterministic amounts to: $\str{q, v}
\TRANS[i] \str{q_1, v_1}$ and $\str{q, v} \TRANS[j] \str{q_2, v_2}$
imply that $i = j$, $q_1 = q_2$, and $v_1 = v_2$.
Note that a trace that does not end with a halting configuration will
not be considered to be halting, even if there is no possible
successor configuration.
It is an easy exercise to transform a given \TWORM into one where
every configuration has a successor except for the halting
configuration.

\medskip

\begin{theorem}[\cite{minsky61aom}] \label{thm:2rm-undec}%
  The halting problem for {\TWORM}s is recursively unsolvable. \qed
\end{theorem}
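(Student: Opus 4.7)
The plan is to cite this as a classical result of Minsky, but if I were to reconstruct the argument, I would reduce the halting problem for Turing machines to the halting problem for \TWORM{}s. The strategy proceeds in two stages: first show that $k$-register Minsky machines (for $k \ge 3$) can simulate Turing machines, then compress $k$ registers down to $2$ using an arithmetic encoding.

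For the first stage, I would fix a deterministic Turing machine with tape alphabet $\set{0,1}$ and use three registers: one for the contents of the tape to the left of the head (read as a binary numeral, least significant bit nearest the head), one for the tape to the right of the head, and one scratch register. Shifting the head left or right corresponds to multiplying/dividing one register by $2$ and adjusting the other, which can be programmed with $\k{incr}$, $\k{decr}$, and the zero-test $\k{isz}$ in a standard way using a finite state diagram over $Q$. The halting state of the Turing machine is mapped to the distinguished state $*$, preceded by a loop that zeroes all registers (repeatedly decrement while $\k{isz}$ fails) so that the final configuration is exactly $\str{*,\set{\ka:0,\kb:0}}$ as required.

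For the second stage, I would encode the contents $(n_1,\dotsc,n_k)$ of the $k$ registers into a single natural number $2^{n_1}\,3^{n_2}\,5^{n_3}\dotsm p_k^{n_k}$ stored in register $\ka$, using $\kb$ purely as scratch. Incrementing the $i$-th virtual register is multiplication of $\ka$ by the prime $p_i$; decrementing is division by $p_i$; testing zero of the $i$-th register amounts to testing whether $\ka$ is divisible by $p_i$. Each of these arithmetic operations is implementable with a finite subgraph of states over $\set{\k{incra},\k{incrb},\k{decra},\k{decrb},\k{isza},\k{iszb}}$: multiplication and division by a constant are done by repeatedly moving units between $\ka$ and $\kb$ while counting, and divisibility is tested by the same process branching on whether $\kb$ becomes zero exactly when $\ka$ does. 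Composing the two stages gives a computable translation from any Turing machine $M$ and input $w$ to a \TWORM whose halting from its designated initial configuration coincides with the halting of $M$ on $w$; undecidability of the latter transfers to the former.

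The genuinely delicate step is the compression from $k$ to $2$ registers, because the only primitive operations on $\ka$ are $\pm 1$ and zero-testing, so multiplication and division by $p_i$ must be programmed as explicit loops that shuttle units between $\ka$ and $\kb$ without destroying information; one has to verify carefully that the finite-state control returns $\kb$ to $0$ on every path and that the divisibility test leaves $\ka$ unchanged. The first stage and the final reduction to Turing-undecidability are essentially bookkeeping.
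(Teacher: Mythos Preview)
The paper gives no proof of this theorem at all: the \qed{} immediately follows the statement, and the result is simply attributed to Minsky~\cite{minsky61aom}. Your proposal correctly identifies this (``the plan is to cite this as a classical result of Minsky''), and the reconstruction you offer---Turing machines to $k$-register machines via binary tape encoding, then $k$ registers to two via prime-power G\"odel numbering with one scratch register---is the standard argument and is sound, including your care about zeroing the registers to match the paper's halting configuration $\str{*,\set{\ka:0,\kb:0}}$. So your write-up goes strictly beyond the paper; there is nothing to compare against and nothing to fix.
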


\section{The Encoding}
\label{sec:encoding}

For a given \TWORM, which we fix in this section, we will encode its
halting problem as the derivability of a particular \MSEL sequent that
encodes its labelled transition system and the initial configuration.
We will use the following subexponential signature in the rest of this
section.

\begin{definition}
  Let $\TWOSI$ stand for the signature $\str{\set{\UNB, \ka, \kb},
    \set{\UNB}, \le}$ where $\le$ is the reflexive-transitive closure
  of $\le_0$ defined by $\ka \le_0 \UNB$ and $\kb \le_0 \UNB$.
  \defin
\end{definition}

\begin{definition}[encoding configurations]
  For $c = \str{q, v}$, we write $\ENC{c}$ for the following
  $\MSEL_\TWOSI$ context:
  \begin{gather*}
    \underbrace{\QM^\ka \NEG \kra, \QM^\ka \NEG \kra, \dotsc,
      \QM^\ka \NEG \kra}_{\text{length } =\ v(\ka)},
    \underbrace{\QM^\kb \NEG \krb, \QM^\kb \NEG \krb, \dotsc,
      \QM^\kb \NEG \krb}_{\text{length } =\ v(\kb)},
    \NEG q
    \tag*{\defin}
  \end{gather*}
\end{definition}

\begin{definition}[encoding transitions]
  The transitions~(\ref{eq:trans}) of the \TWORM are encoded as a
  context $\Pi$ with:
  \begin{itemize}
  \item to represent $\str{q,v} \TRANS[\k{halt}] \str{*,
      \set{\ka:0,\kb:0}}$, the elements: $q \TENS \NEG \kh, \kh \TENS
    \BANG^\ka \kra \TENS \NEG \kh, \kh \TENS \BANG^\kb \krb \TENS \NEG
    \kh, \kh \TENS \BANG^\infty \ONE$ (for some $\kh \notin Q$):
  \item to represent $\str{q, \set{\ka:m, \kb:n}} \TRANS[\k{incra}]
    \str{r, \set{\ka:m + 1, \kb:n}}$, the element $q \TENS (\NEG
    r \PAR \QM^\ka \NEG \kra)$;
  \item to represent $\str{q, \set{\ka:m, \kb:n}} \TRANS[\k{incrb}]
    \str{r, \set{\ka:m, \kb:n + 1}}$, the element: $q \TENS (\NEG
    r \PAR \QM^\kb \NEG \krb)$;
  \item to represent $\str{q, \set{\ka:m + 1, \kb:n}}
    \TRANS[\k{decra}] \str{r, \set{\ka:m, \kb:n}}$, the element: $q
    \TENS \BANG^\ka \kra \TENS \NEG r$;
  \item to represent $\str{q, \set{\ka:m, \kb:n + 1}}
    \TRANS[\k{decrb}] \str{r, \set{\ka:m, \kb:n}}$, the element: $q
    \TENS \BANG^\kb \krb \TENS \NEG r$;
  \item to represent $\str{q, \set{\ka:0, \kb:n}} \TRANS[\k{isza}]
    \str{r, \set{\ka:0, \kb:n}}$, the element: $q \TENS \BANG^\kb \NEG
    r$; and
  \item to represent $\str{q, \set{\ka:m, \kb:0}} \TRANS[\k{iszb}]
    \str{r, \set{\ka:m, \kb:0}}$, the element: $q \TENS \BANG^\ka \NEG
    r$.
  \end{itemize}
  Note that $\Pi$ contains a finite number of elements. \defin
\end{definition}

\begin{definition}[encoding the halting problem]
  If $\G$ is $A_1, \dotsc, A_n$, then let $\QM^u \G$ stand for $\QM^u
  A_1, \dotsc, \QM^u A_n$.
  The encoding of the halting problem for the \TWORM from the initial
  configuration $c_0 = \str{q_0, v_0}$ is the $\MSEL_\TWOSI$ sequent
  $|- \QM^\infty \Pi, \ENC{c_0}$. \defin
\end{definition}

\begin{theorem} \label{thm:forward}%
  If the \TWORM halts from $c_0$, then $|-_\TWOSI \QM^\infty \Pi,
  \ENC{c_0}$ is derivable.
\end{theorem}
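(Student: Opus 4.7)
The plan is to induct on the length of a halting trace $\vec i$ witnessing $\str{q_0,v_0} \TRANS[\vec i] \str{*, \set{\ka:0,\kb:0}}$, and at each stage construct a derivation of $|- \QM^\infty \Pi, \ENC{\str{q_0,v_0}}$. The recurring recipe is: pick the element $F \in \Pi$ corresponding to the next instruction, contract its $\QM^\infty$-packaged copy (legal because $\infty \in U_\TWOSI$), apply the $\QM$ rule to expose $F$, then decompose $F$ with the multiplicative (and, where required, $\BANG$) rules so that the IH sequent emerges on one branch and every other branch closes by \rn{init}.

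For the inductive step, write $\vec i = i_1 :: \vec i'$ with $\vec i' \neq []$, let $c_1$ be the configuration reached from $c_0$ by $i_1$, and invoke the IH on $|- \QM^\infty \Pi, \ENC{c_1}$. For an increment instruction the element has shape $q \TENS (\NEG r \PAR \QM^u \NEG x)$, which after $\TENS$ produces one branch closed by \rn{init} on $q$ and a second branch whose $\PAR$ exposes exactly the extra $\QM^u \NEG x$ needed to reconstruct $\ENC{c_1}$. For a decrement instruction, $q \TENS \BANG^u x \TENS \NEG r$ splits via two $\TENS$ rules into three branches: \rn{init} on $q$; a $\BANG^u$-promotion against a single isolated $\QM^u \NEG x$ (side condition $u \le u$ trivial); and the IH sequent. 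For a zero-test instruction, $q \TENS \BANG^v \NEG r$ (with $v$ the \emph{other} register's label) splits into \rn{init} on $q$ and a branch to which $\BANG^v$ is applied; the promotion's side condition holds precisely because the tested register is empty---otherwise the context would carry $\QM^u$-tokens with $u$ incomparable to $v$ in $\TWOSI$---so the zero-check is enforced structurally by the signature.

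The base case $\vec i = [\k{halt}]$ cascades the four halt elements of $\Pi$. First, $q_0 \TENS \NEG \kh$ discharges $\NEG q_0$ by \rn{init}, leaving $|- \QM^\infty \Pi, (\QM^\ka \NEG \kra)^{v_0(\ka)}, (\QM^\kb \NEG \krb)^{v_0(\kb)}, \NEG \kh$. The element $\kh \TENS \BANG^\ka \kra \TENS \NEG \kh$ is applied $v_0(\ka)$ times: each use, via a three-way $\TENS$-split, closes one $\kh$ by \rn{init}, promotes $\BANG^\ka$ against exactly one $\QM^\ka \NEG \kra$ (the promotion branch isolates that single token), and regenerates $\NEG \kh$ on the continuation; analogous applications of $\kh \TENS \BANG^\kb \krb \TENS \NEG \kh$ clear the $\QM^\kb$ tokens. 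A final $\kh \TENS \BANG^\infty \ONE$ swaps the last $\NEG \kh$ for $\BANG^\infty \ONE$, whose promotion is legal because every remaining context label is $\infty$; weakening out $\QM^\infty \Pi$ (unbounded) and the $\ONE$ rule complete the proof.

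The main bookkeeping obstacle is the $\TENS$-split: the non-weakenable $\QM^\ka$ and $\QM^\kb$ tokens must be routed deterministically to whichever branch continues the simulation, leaving at most the one $\QM^u$-token required on each $\BANG^u$-branch. Once these splits are fixed, the $\BANG$ side conditions discharge mechanically, and---as noted above---it is exactly these conditions that allow $\TWOSI$ to simulate decrements and zero-tests faithfully.
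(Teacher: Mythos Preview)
Your proposal is correct and follows essentially the same approach as the paper: the paper shows that each transition $c \TRANS[i] d$ yields a derivable rule with conclusion $|- \QM^\infty \Pi, \ENC{c}$ and premise $|- \QM^\infty \Pi, \ENC{d}$, treating the \k{halt} instruction as the terminal case that closes the derivation via the four $\kh$-clauses---exactly the induction on trace length you make explicit. Your per-instruction derivations (contract/$\QM$, then $\TENS$-split with \rn{init} on the state atom, $\PAR$ for increments, isolated $\BANG^u$-promotion for decrements, and the signature-enforced $\BANG^v$-promotion for zero-tests) and your cascade for \k{halt} match the paper's representative cases line for line.
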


\begin{proof}
  We will show that if $c = \str{q_1, v_1} \TRANS[i] \str{q_2, v_2} =
  d$ (for some $i$), then the following $\MSEL_\TWOSI$ rule is
  derivable:

  \vspace{-2em}
  \begin{gather*}
    \infer{
      |- \QM^\infty \Pi, \ENC{c}
    }{
      |- \QM^\infty \Pi, \ENC{d}
    }
  \end{gather*}
  This is largely immediate by inspection. Here are three
  representative cases.
  \begin{itemize}
  \item The case of $i = \k{incra}$: it must be that $v_2(\ka) =
    v_1(\ka) + 1$ and $v_2(\kb) = v_1(\kb)$, so $\ENC{d} =
    \ENC{c}\setminus\set{\NEG q_1}, \NEG q_2, \QM^\ka \kra$.
    Moreover, $q_1 \TENS (\NEG q_2 \PAR \QM^\ka \NEG \kra) \in \Pi$.
    So:
    \begin{gather*}
      \infer[\rn{contr}, \QM]{
        |- \QM^\infty \Pi, \ENC{c}
      }{
      \infer[\TENS]{
        |- \QM^\infty \Pi, \ENC{c}, q_1 \TENS (\NEG q_2 \PAR \QM^\ka \NEG \kra)
      }{
        \infer[\rn{init}]{|- \NEG q_1, q_1}{} &
      \infer[\PAR]{
        |- \QM^\infty \Pi, \ENC{c}\setminus\set{\NEG q_1}, \NEG q_2 \PAR \QM^\ka \NEG \kra
      }{
        |- \QM^\infty \Pi, \ENC{c}\setminus\set{\NEG q_1}, \NEG q_2, \QM^\ka \NEG \kra
      }}}
    \end{gather*}
    The cases for \k{incrb}, \k{decra}, and \k{decrb} are similar.
  \item The case of $i = \k{isza}$: it must be that $v_2(\ka) =
    v_1(\ka) = 0$ and $v_2(\kb) = v_1(\kb)$.
    Hence, $\ENC{d} = \ENC{c}\setminus\set{\NEG q_1}, \NEG q_2$ and
    $\QM^\ka \kra \notin \ENC{c} \cup \ENC{d}$.
    Moreover, $q_1 \TENS \BANG^\kb \NEG q_2 \in \Pi$. So:
    \begin{gather*}
      \infer[\rn{contr}, \QM]{
        |- \QM^\infty \Pi, \ENC{c}
      }{
      \infer[\TENS]{
        |- \QM^\infty \Pi, \ENC{c},
           q_1 \TENS \BANG^\kb \NEG q_2
      }{
        \infer[\rn{init}]{|- \NEG q_1, q_1}{}
        &
      \infer[\BANG]{
        |- \QM^\infty \Pi, \ENC{c}\setminus\set{\NEG q_1},
           \BANG^\kb \NEG q_2
      }{
        |- \QM^\infty \Pi, \ENC{c}\setminus\set{\NEG q_1},
           \NEG q_2
      }}}
    \end{gather*}
    The instance of $\BANG$ is justified because $\kb \le \infty$ and
    $\kb \le \kb$, and there are no $\QM$-formulas labelled $\ka$ or
    non-$\QM$ formulas in the sequent.
    The case of \k{iszb} is similar.
  \item The case of $i = \k{halt}$.
    Here, we know that $q_1 \TENS \NEG \kh \in \Pi$, so:
    \begin{gather*}
      \infer[\rn{contr}, \QM]{
        |- \QM^\infty \Pi, \ENC{c}
      }{
      \infer[\TENS]{
        |- \QM^\infty \Pi, \ENC{c}, q_1 \TENS \NEG \kh
      }{
        \infer[\rn{init}]{|- \NEG q_1, q_1}{}
        &
        |- \QM^\infty \Pi, \ENC{c} \setminus\set{\NEG q_1}, \NEG \kh
      }}
    \end{gather*}
    Now, as long as there are any occurrences of $\QM^\ka \kra$ or
    $\QM^\ka \krb$ in $\ENC{c}$, we can apply one of the decrementing
    rules $\kh \TENS \BANG^\ka \kra \TENS \NEG \kh$ or $\kh \TENS
    \BANG^\kb \krb \TENS \NEG \kh \in \Pi$.
    The general case looks something like this, where $\D_\kra =
    \set{\NEG \kra, \dotsc, \NEG \kra}$ and $\D_\krb = \set{\NEG \krb,
      \dotsc, \NEG \krb}$.
    \begin{gather*}
      \infer[\rn{contr},\QM]{
        |- \QM^\infty \Pi,
        \ENC{c} \setminus\set{\smash{\NEG q_1,
          \QM^\ka \D_\kra, \QM^\kb \D_\krb, \QM^\ka \NEG \kra}},
        \QM^\ka \NEG \kra,
        \NEG \kh
      }{
      \infer[\TENS, \TENS]{
        |- \QM^\infty \Pi,
        \ENC{c} \setminus\set{\smash{\NEG q_1,
          \QM^\ka \D_\kra, \QM^\kb \D_\krb, \QM^\ka \NEG \kra}},
        \QM^\ka \NEG \kra,
        \NEG \kh, \kh \TENS \BANG^\ka \kra \TENS \NEG \kh
      }{
        \infer[\rn{init}]{|- \kh, \NEG \kh}{}
        &
        \infer[\BANG]{
          |- \QM^\ka \NEG \kra, \BANG^\ka \kra
        }{
        \infer[\QM]{
          |- \QM^\ka \NEG \kra, \kra
        }{
        \infer[\rn{init}]{
          |- \NEG \kra, \kra
        }{
        }}}
        &
        |- \QM^\infty \Pi,
        \ENC{c} \setminus\set{\smash{\NEG q_1,
          \QM^\ka \D_\kra, \QM^\kb \D_\krb, \QM^\ka \NEG \kra}},
        \NEG \kh
      }}
    \end{gather*}
    There is a symmetric case for contracting the $\kh \TENS \BANG^\kb
    \krb \TENS \NEG \kh$.
    Eventually, the right branch just becomes $|- \QM^\infty \Pi, \NEG
    \kh$, at which point we have:
    \begin{gather*}
      \infer[\rn{contr}, \QM]{
        |- \QM^\infty \Pi, \NEG \kh
      }{
      \infer[\TENS]{
        |- \QM^\infty \Pi, \NEG \kh, \kh \TENS \BANG^\infty \ONE
      }{
        \infer[\rn{init}]{|- \kh, \NEG \kh}{}
        &
        \infer[\BANG]{
          |- \QM^\infty \Pi, \BANG^\infty \ONE
        }{
        \infer*[\rn{weak}]{
          |- \QM^\infty \Pi, \ONE
        }{
          \infer[\ONE]{|- \ONE}{}
        }}
      }}
    \tag*{\qedhere}
    \end{gather*}
  \end{itemize}
\end{proof}

\section{Adequacy of the Encoding via Focusing}
\label{sec:adequacy}

By the contrapositive of theorem~\ref{thm:forward}, if the sequent
$|-_\TWOSI \QM^\infty \Pi, \ENC{c_0}$ is not derivable, then the
\TWORM does not halt from $c_0$.
This gives half of the reduction.
For the converse of theorem~\ref{thm:forward}, we need to show how to
recover a halting trace by searching for proofs of a $\MSEL_\TWOSI$
encoding of a halting problem.
The best way to do this is to build a focused proof which will have
the derived inference rules in the above proof as the only possible
\emph{synthetic} rules, in a sense made precise below.
We will begin by sketching the focused proof system for \SEL that is
sound and complete for the unfocused system of
figure~\ref{fig:sel-rules}, and then show how the synthetic rules for
the encoding are in bijection for all instructions (with a small
correction needed for \k{halt}).


\begin{figure}
  \centering \small%
  \begin{gather*}
    \linfer[\foc{\rn{init}}]{
      |-_\Si \QM^{\vec u} \vec A, \NEG a, \foc{a}
    }{
      (\vec u \in U_\Si)
    }
    \quad
    \linfer[\foc{\TENS}]{
      |-_\Si \QM^{\vec u} \vec A, \W_1, \W_2, \foc{B \TENS C}
    }{
      (\vec u \in U_\Si)
      &
      |-_\Si \QM^{\vec u} \vec A, \W_1, \foc{B}
      &
      |-_\Si \QM^{\vec u} \vec A, \W_2, \foc{C}
    }
    \quad
    \linfer[\foc{\ONE}]{
      |-_\Si \QM^{\vec u} \vec A, \foc{\ONE}
    }{
      (\vec u \in U_\Si)
    }
    \\[1ex]
    \linfer[\foc{\PLUS_1}]{
      |- \W, \foc{A \PLUS B}
    }{
      |- \W, \foc{A}
    }
    \quad
    \linfer[\foc{\PLUS_2}]{
      |- \W, \foc{A \PLUS B}
    }{
      |- \W, \foc{B}
    }
    \quad
    \linfer[\text{no rule for $\ZERO$}]{}{}
    \quad\
    \linfer[\foc{\BANG}]{
      |-_\Si\ \QM^{\vec v} \vec A, \QM^{\vec w} \vec B, \foc{\BANG^u C}
    }{
      \begin{array}[b]{c}
        (u \le_\Si \vec v) \\
        (\vec w \in U_\Si)
      \end{array}
      &
      |-_\Si \QM^{\vec v} \vec A, C
    }
    \quad
    \linfer[\foc{\rn{blur}}]{
      |- \W, \foc{N}
    }{
      |- \W, N
    }
  \end{gather*}
  \hbox to .8\linewidth{\dotfill}
  \vspace{-1em}
  \begin{gather*}
    \linfer[\text{Rules $\PAR$, $\BOT$, $\WITH$, $\TOP$ shared with the unfocused system}]{}{}
    \\[1ex]
    \linfer[\rn{decide}]{
      |- \W, P
    }{
      |- \W, \foc{P}
    }
    \qquad
    \linfer[\rn{ldecide}]{
      |-_\Si \W, \QM^u A
    }{
      (u \notin U_\Si)
      &
      |-_\Si \W, \foc{A}
    }
    \qquad
    \linfer[\rn{udecide}]{
      |-_\Si \W, \QM^u A
    }{
      (u \in U_\Si)
      &
      |-_\Si \W, \QM^u A, \foc{A}
    }
  \end{gather*}
  \vspace{-2em}
  \caption{%
    Inference rules for a focused sequent calculus formulation of
    \SEL.}
  \label{fig:fsel-rules}
\end{figure}

Focusing is a general technique to restrict the non-determinism in a
cut-free sequent proof system.
Though originally defined for classical linear logic
in~\cite{andreoli92jlc}, it is readily extended to many other
logics~\cite{chaudhuri08jar,liang09tcs,nigam09phd}.
This section sketches the basic design of a focused version of the
rules of figure~\ref{fig:sel-rules}, and omits most of the
meta-theoretic proofs of soundness and completeness, for which the
general proof techniques are by now well
known~\cite{chaudhuri08jar,miller07cslb,simmons14tocl}.
To keep things simple, we will define a focused calculus by adding to
the unfocused system a new kind of \emph{focused sequent}, $|- \W,
\foc{A}$, where the formula $A$ is \emph{under focus}.
Contexts written with $\W$, which we call \emph{neutral contexts}, can
contain only positive formulas, atoms, negated atoms, and
$\QM$-formulas.
The rules of the focused proof system for \SEL are depicted in
figure~\ref{fig:fsel-rules}.

Focused sequents are created---reading from conclusion upwards to
premises---from unfocused sequents with neutral contexts by means of
the rules \rn{decide}, \rn{ldecide}, or \rn{udecide}.
In a focused sequent, only the formula under focus can be principal,
and the focus persists on the immediate subformulas of this formula in
the premises, with the exception of the rule $\foc{\BANG}$.
In the base case, for \focrn{init}, the focused atom must find its
negation in the context, while all formulas in the context must be
$\QM$-formulas with unbounded labels.
When the focused formula is negative, the focus is released with the
\focrn{blur} rule, at which point any of the unfocused rules
$\set{\PAR,\BOT,\WITH,\TOP}$ of figure~\ref{fig:sel-rules} can be used
to decompose the formula and its descendants further.
Eventually, when there are no more negative descendants---\ie the
whole context has the form $\W$---a new focused phase is launched
again and the cycle repeats.
Note that the structural rules \rn{contr} and \rn{weak} of the
unfocused calculus are removed in the focused system.
Instead, weakening is folded into \focrn{init}, $\foc{\BANG}$, and
$\foc{\ONE}$, and contraction is folded into $\foc{\TENS}$ and
\rn{udecide}.
The rules \rn{contr} and \rn{weak} remain admissible for either
sequent form in the focused calculus.


\begin{theorem} \label{thm:foc-compl} %
  The \SEL sequent $|- \G$ is provable in the unfocused system of
  figure~\ref{fig:sel-rules} iff it is provable in the focused system
  of figure~\ref{fig:fsel-rules}.
\end{theorem}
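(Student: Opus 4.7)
The plan is to establish the two directions separately. The soundness direction (focused implies unfocused) is by a routine induction on the focused derivation: each focused inference rule translates to the corresponding unfocused rule of figure~\ref{fig:sel-rules} after erasing focus brackets, while \rn{decide}, \rn{ldecide}, \rn{udecide}, and \focrn{blur} become structural no-ops at the unfocused level, modulo some auxiliary uses of \rn{weak} and \rn{contr} needed to compensate for the fact that structural rules are folded into \focrn{init}, $\foc{\TENS}$, $\foc{\ONE}$, $\foc{\BANG}$, and \rn{udecide}. This direction involves no real difficulty.

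For the completeness direction (unfocused implies focused), I would follow the cut-based strategy of~\cite{chaudhuri08jar,simmons14tocl} rather than Andreoli-style rule permutation, since the subexponentials make explicit permutation arguments cumbersome. The first step is to verify that \rn{weak} and \rn{contr} are height-preserving admissible in the focused system, both on unfocused and on focused sequents. For \rn{weak} this is a straightforward induction; for \rn{contr} on a $\QM^u A$ formula with $u \in U_\Si$ the argument must commute the duplicated formula past any \rn{udecide} that selects it, exploiting the upward closure of $U_\Si$ under $\le_\Si$ to discharge the side condition of $\foc{\BANG}$.

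The second and central step is to prove that a general cut rule is admissible in the focused calculus, where cuts may act both on unfocused sequents and on a formula being passed between a focused and a neutral sequent. The proof is by the usual lexicographic induction on the cut formula and the heights of the two premises. Principal cases match a positive right-introduction against a corresponding negative left-decomposition in the other premise; commutative cases push the cut upwards. Once cut-admissibility is available, each unfocused rule is simulated by a short focused derivation against an appropriate focused axiom, and an unfocused derivation is converted to a focused one rule by rule by composing these simulations with cuts which are then eliminated.

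The main obstacle is the cut-admissibility step, and within it the interaction of cut with $\foc{\BANG}$ and the \rn{decide} family. A cut whose cut-formula is $\BANG^u C$ against a $\QM^u$ formula requires commuting the cut past uses of \rn{udecide}, possibly duplicating the cut; the side condition $u \le_\Si \vec v$ of the promotion rule must be preserved through such duplications, and here the pre-order axioms of $\Si$ (reflexivity, transitivity, and upward closure of $U_\Si$) are exactly what is needed to re-establish the side conditions in the residual derivations. Once this case is handled, the remaining commutative cases are mechanical, and the overall completeness theorem follows.
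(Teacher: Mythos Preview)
Your proposal is correct and aligns with the paper's own treatment: the paper gives only a proof sketch that defers to the standard focusing-completeness literature~\cite{chaudhuri08jar,miller07cslb,simmons14tocl} and to Nigam's thesis~\cite{nigam09phd} for the \SEL instance, and the cut-admissibility route you spell out is precisely the strategy of~\cite{chaudhuri08jar,simmons14tocl}. In effect you have expanded the paper's citation into an actual outline, so there is no divergence to report.
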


\begin{proof}[Sketch]
  Straightforward adaptation of existing proofs of the soundness and
  completeness of focusing, such
  as~\cite{chaudhuri08jar,miller07cslb,simmons14tocl}.
  An instance for \SEL can be found in~\cite[chapter 5]{nigam09phd}.
\end{proof}

\begin{theorem} \label{thm:backward}%
  The \TWORM halts from $c_0$ if $|-_\TWOSI \QM^\infty \Pi, \ENC{c_0}$
  is derivable.
\end{theorem}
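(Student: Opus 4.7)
The plan is to read off a halting trace by analyzing the structure of a focused proof of $|-_\TWOSI \QM^\infty \Pi, \ENC{c_0}$, which by theorem \ref{thm:foc-compl} can be assumed to exist given the unfocused derivation. The strategy is to verify that, for every neutral sequent of the form $|- \QM^\infty \Pi, \ENC{c}$, the applicable synthetic rules (maximal focused phases) are in essential bijection with the \TWORM transitions out of $c$; since a focused proof is a finite tree, walking down from its root along the outermost synthetic rules must terminate, and the only terminating branch is the halt encoding---so the sequence of transitions recovered is a halting trace.

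I would begin by enumerating the possible first focus steps. Because $\ENC{c}$ contains only negated atoms and $\QM$-formulas, \rn{decide} is not applicable; the only options are \rn{udecide} on a transition formula $\QM^\infty(q \TENS X) \in \QM^\infty \Pi$ or \rn{ldecide} on a counter $\QM^\ka \NEG \kra$ or $\QM^\kb \NEG \krb$. For a \rn{udecide} on a transition formula, the subsequent $\foc{\TENS}$ and \focrn{init} on the $\foc{q}$ subbranch force $\NEG q$ to appear in the context, so $q$ must equal the current state. The shape of $X$ then determines the shape of the premise: (i) the increment cases blur through a $\PAR$ and deposit a fresh $\QM^\ka \NEG \kra$ (or $\QM^\kb \NEG \krb$) into the context; (ii) the decrement cases, via $\foc{\BANG^\ka \kra}$, consume exactly one counter, because $\ka \notin U_\TWOSI$ forbids weakening any $\QM^\ka$-formula allocated to that branch while the final init on $\kra$ admits only one negated-atom companion; (iii) the zero-test cases via $\foc{\BANG^\kb \NEG r}$ cannot fire at all while any $\QM^\ka$-formula remains, since $\ka \not\le_\TWOSI \kb$ and $\ka \notin U_\TWOSI$ rule out both options of the $\foc{\BANG}$ rule, and the earlier $\foc{\TENS}$ split cannot route such a formula to the $\foc{q}$ subbranch either---this matches the zero-test precondition $v(\ka) = 0$.

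For the halt case, the synthetic rule on $q \TENS \NEG \kh$ simply swaps $\NEG q$ for $\NEG \kh$; from there the $\kh$-indexed rules form a self-contained sub-proof that decrements both registers in any order (each application of $\kh \TENS \BANG^\ka \kra \TENS \NEG \kh$ or $\kh \TENS \BANG^\kb \krb \TENS \NEG \kh$ consuming exactly one counter, as in case (ii) above), terminating with $\kh \TENS \BANG^\infty \ONE$, whose $\foc{\BANG^\infty \ONE}$ step is licensed precisely when no $\QM^\ka$ or $\QM^\kb$ counter remains, so that all surviving formulas carry unbounded labels and can be weakened. This matches the halting postcondition $\str{*, \set{\ka:0, \kb:0}}$.

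The main obstacle will be to rule out or normalize spurious \rn{ldecide}s on counter formulas, which blur to place a loose $\NEG \kra$ or $\NEG \krb$ into the neutral context that can only ever be discharged against a matching $\foc{\kra}$ (or $\foc{\krb}$) arising inside a later $\foc{\BANG^\ka \kra}$ (or $\foc{\BANG^\kb \krb}$). I expect to handle this by a permutation argument on focused derivations, showing that any premature \rn{ldecide} can be delayed until the counter is actually consumed by a decrement or halt step; the resulting canonical focused proof then has an outermost sequence of synthetic rules that is precisely a halting trace $\vec i$ witnessing $\str{q_0, v_0} \TRANS[\vec i] \str{*, \set{\ka:0, \kb:0}}$.
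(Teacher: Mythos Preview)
Your approach is essentially the paper's: pass to a focused proof via theorem~\ref{thm:foc-compl}, then argue that the synthetic rules applicable to a sequent of the form $|-_\TWOSI \QM^\infty \Pi, \ENC{c}$ are in bijection with the \TWORM transitions out of $c$, and read off the trace by induction on the focused derivation. Your case analysis of the transition formulas (increment, decrement, zero-test, halt) matches the paper's.

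The one place you diverge is the treatment of \rn{ldecide} on a counter formula. You propose a permutation argument to delay ``premature'' \rn{ldecide}s, on the premise that a loose $\NEG \kra$ (or $\NEG \krb$) can eventually be discharged against $\foc{\kra}$ arising inside a later $\foc{\BANG^\ka \kra}$. This premise is false: the $\foc{\BANG}$ rule requires its entire side context to consist of $\QM$-formulas, so a bare negated atom can never sit in the context at the point where $\foc{\BANG^\ka \kra}$ is reduced, and hence can never meet the resulting $\kra$. Once an \rn{ldecide} strips a counter, the sequent carries two bare negated atoms (the state $\NEG q$ and the loose $\NEG \kra$), and no sequence of rules can eliminate both---so the sequent is simply unprovable. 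There is nothing to permute. The paper makes exactly this observation in one line and dismisses \rn{ldecide} outright. Replace your permutation argument with that direct unprovability observation and your proof coincides with the paper's.
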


\begin{proof}
  We will show instead that the \TWORM halts from $c_0$ if the sequent
  $|-_\TWOSI \QM^\infty \Pi, \ENC{c_0}$ is derivable in the focused
  calculus, and we will moreover extract the halting trace from such a
  focused proof.
  The required result will then follow immediately from
  theorem~\ref{thm:foc-compl}, since any provable \SEL sequent has a
  focused proof.

  Let a focused proof of $|-_\TWOSI \QM^\infty \Pi, \ENC{c}$ (for $c =
  \str{q, v}$) be given.
  We proceed by induction on the lowermost instance of \rn{udecide} in
  this proof.
  Note that the $\MSEL_\TWOSI$ context $\QM^\infty \Pi, \ENC{c}$ is
  neutral; moreover, all the elements of $\ENC{c}$ are either negated
  atoms or $\QM$-prefixed negated atoms with bounded labels.
  So, the only rules of the focusing system that apply to this sequent
  are \rn{ldecide} or \rn{udecide}.
  However, if we use \rn{ldecide}, then the premise becomes
  unprovable, as there is no way to remove an occurrence of $\NEG
  \kra$ or $\NEG \krb$ from a context that also contains $\NEG q$.
  Thus, the only possible rule will be an instance of \rn{udecide},
  with the focused formula in the premise being one of the $\Pi$.
  First, consider the case where the focused formula does not contain
  $\k{h}$, \ie it corresponds to one of the instructions in
  $I\setminus\set{\k{halt}}$.
  In each of these cases, the focused phase that immediately follows
  is deterministic.
  As a characteristic case, suppose the focused formula is $q \TENS
  \BANG^\kb \NEG r$; then we have:
  \begin{gather*}
    \infer[\rn{udecide}]{
      |- \QM^\infty \Pi, \ENC{c}
    }{
    \infer[\foc{\TENS}]{
      |- \QM^\infty \Pi, \ENC{c},
         \foc{\smash{q \TENS \BANG^\kb \NEG r}}
    }{
      \infer[\rn{init}]{|- \NEG q, q}{}
      &
    \infer[\foc{\BANG}]{
      |- \QM^\infty \Pi, \ENC{c}\setminus\set{\NEG q},
         \foc{\smash{\BANG^\kb \NEG r}}
    }{
      |- \QM^\infty \Pi, \ENC{c}\setminus\set{\NEG q},
         \NEG r
    }}}
  \end{gather*}
  The right premise is now itself neutral and an encoding of a
  different configuration.
  We can appeal to the inductive hypothesis to find a halting trace
  for it, to which we can prepend the instruction \k{isza} to get the
  halting trace from $c$.
  A similar argument can be used for the other instructions in
  $I\setminus\set{\k{halt}}$.

  This leaves just the formulas involving $\k{h}$ for the lowermost
  \rn{udecide}.
  We cannot select any formula but $q \TENS \NEG \kh$ from $\Pi$, for
  the derivation would immediately fail because $\kh \notin Q$ and
  there is no $\NEG \kh$ in $\ENC{c}$ to use with \focrn{init}.
  So, as the formula selected is $q \TENS \NEG \kh$, we have:
  \begin{gather*}
    \infer[\rn{udecide}]{
      |- \QM^\infty \Pi, \ENC{c}
    }{
    \infer[\TENS]{
      |- \QM^\infty \Pi, \ENC{c}, \foc{q \TENS \NEG \kh}
    }{
      \infer[\focrn{init}]{|- \NEG q, \foc{q}}{}
      &
      \infer[\focrn{blur}]{
        |- \QM^\infty \Pi, \ENC{c} \setminus\set{\NEG q}, \foc{\NEG \kh}
      }{
        |- \QM^\infty \Pi, \ENC{c} \setminus\set{\NEG q}, \NEG \kh
      }
    }}
  \end{gather*}
  The context of the right premise is now neutral, so the only rule
  that applies to it is \rn{udecide}.
  A simple nested induction will show that sequents of this form $|-
  \QM^\infty \Pi, \ENC{c} \setminus\set{\NEG q}, \NEG \kh$ are always
  derivable in the focused calculus.
  Therefore, the trace that corresponds to the configuration $c$ is
  just the singleton \k{halt}.
\end{proof}

\begin{corollary}
  The derivability of $\MSEL_\TWOSI$ sequents is recursively
  unsolvable.
\end{corollary}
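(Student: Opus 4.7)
The plan is to combine the two directions already established, namely theorem~\ref{thm:forward} (if the \TWORM halts from $c_0$, then $|-_\TWOSI \QM^\infty \Pi, \ENC{c_0}$ is derivable) and theorem~\ref{thm:backward} (its converse), into a many-one reduction from the halting problem for {\TWORM}s to $\MSEL_\TWOSI$ derivability, and then invoke theorem~\ref{thm:2rm-undec}.

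Concretely, first I would observe that for any \TWORM $M = \str{Q,*,\CONF,\TRANS{}}$ and any initial configuration $c_0 = \str{q_0, v_0}$, the $\MSEL_\TWOSI$ context $\QM^\infty \Pi, \ENC{c_0}$ is effectively computable from the finite description of $M$ and $c_0$: the transition encoding $\Pi$ has one element per element of $\TRANS{}$ (of which there are finitely many because $Q$ is finite and each schema of~(\ref{eq:trans}) is parametric in $q,r$), and $\ENC{c_0}$ consists of a number of copies of $\QM^\ka \NEG \kra$ and $\QM^\kb \NEG \krb$ determined by $v_0(\ka)$ and $v_0(\kb)$, together with $\NEG q_0$. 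Hence the map $(M, c_0) \mapsto\ {|-_\TWOSI \QM^\infty \Pi, \ENC{c_0}}$ is a total computable function.

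Second, combining theorem~\ref{thm:forward} and theorem~\ref{thm:backward} yields that $M$ halts from $c_0$ if and only if $|-_\TWOSI \QM^\infty \Pi, \ENC{c_0}$ is derivable. This exhibits a many-one reduction from the halting problem for {\TWORM}s to the derivability problem for $\MSEL_\TWOSI$. If derivability were recursively solvable, composing the decision procedure with the effective encoding would give a decision procedure for \TWORM halting, contradicting theorem~\ref{thm:2rm-undec}. Therefore derivability of $\MSEL_\TWOSI$ sequents is recursively unsolvable.

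There is no real obstacle here: both directions of the reduction have been established, and the only thing to verify is that the encoding is itself effective, which is immediate from its definition. The writeup need only make the reduction explicit and appeal to theorem~\ref{thm:2rm-undec}.
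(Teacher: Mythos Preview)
Your proposal is correct and follows exactly the paper's approach: the paper's own proof is the single line ``Directly from theorems~\ref{thm:2rm-undec}, \ref{thm:forward}, and \ref{thm:backward},'' and you have simply unpacked this by making the many-one reduction and the effectiveness of the encoding explicit.
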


\begin{proof}
  Directly from theorems~\ref{thm:2rm-undec}, \ref{thm:forward}, and
  \ref{thm:backward}.
\end{proof}

\section{Conclusion and Perspectives}
\label{sec:perspectives}

We have given a fairly obvious encoding of a \TWORM in a suitable
instance of \MSEL containing a three element subexponential signature.
The encoding of the \TWORM halting problem is very similar to that
of~\cite{lincoln90apal} for \MAELL; the main difference is in the
encoding of the \k{isz} transitions where we can directly check for
emptiness of the relevant zone instead of making an additive copy of
the world and checking this property in the copy.
Additives are therefore not necessary for undecidability.

Yet, this conclusion is not entirely satisfactory.
If $\MSEL_\TWOSI$ can simulate Turing machines, then it can obviously
simulate a theorem prover that implements a complete search procedure
for \MAELL.
Thus, in an indirect fashion, this paper establishes that additive
behaviour can be encoded using subexponentials and multiplicatives
alone.
It would be interesting to build this encoding of additives more
directly as an embedding of \MAELL---or even just \MALL---in \MSEL.

This work leaves open the questions of decidability of an arbitrary
\MSEL with a two-element signature or a one-element signature; the
latter is equivalent to the decidability of \MELL itself.
We also conjecture that the decision problem for an arbitrary \MSEL
with no unbounded subexponentials is PSPACE-hard, because it is very
likely possible to polynomially and soundly encode a \MALL sequent in
such an \MSEL.

Finally, this undecidability result should be taken as a word of
caution for the increasingly popular uses of \SEL as a logical
framework for the encodings of other systems, such
as~\cite{nigam13concur,nigam11entcs}.
If one is to avoid encoding a decidable problem in terms of an
undecidable one, subexponentials must be used very carefully.


\smallskip \noindent%
\textbf{Historical note}: The undecidability result presented here is
from an unpublished paper from 2009, cited as the source of the result
in Nigam's Ph.D. thesis from the same year~\cite[p. 103]{nigam09phd}.
Nigam has also published an indirect proof in~\cite{nigam12lics},
using the same strategy and roughly the same encoding, but this
version also uses the additive unit $\TOP$ for halting states and is
therefore not strictly in \MSEL.

\end{document}
